\newcommand{\labell}[1]{\label{#1}\qquad_{#1}} 
\newcommand{\bbibitem}[1]{\bibitem{#1}\marginpar{#1}}
\newcommand{\llabel}[1]{\label{#1}\marginpar{#1}}
\newtheorem{lem}{Lemma}
\newtheorem{thm}{Theorem}
\def\Label#1{\label{#1}%
  \smash{\hbox to0pt{\raise1ex\hbox{\tiny[#1]}\hss}}}
\def\noLabels{\let\Label=\label}
\def\nolabells{\let\labell=\label}
\def\nollabels{\let\llabel=\label}
\def\nobbibitem{\let\bbibitem=\bibitem}
\title{{\bf Consistently melting crystals}}
\author{Klaus Larjo\footnote{larjo@phas.ubc.ca}
\\[1mm]
\small \sl Department of Physics and Astronomy, University of British Columbia, \\
[-1.5mm]
\small \sl  Vancouver, B.C. V6T 1Z1, Canada\\
}
\date{}
\begin{document}

\noLabels \nollabels \setlength{\baselineskip}{16pt} \nobbibitem
\begin{titlepage}

\maketitle

\begin{abstract}
Recently Ooguri and Yamazaki proposed a statistical model of melting crystals to count BPS bound states of certain D-brane
configurations on toric Calabi--Yau manifolds [arXiv:0811.2801]. This construction relied on a set of consistency conditions on
the corresponding brane tiling, and in this note I show that these conditions are satisfied for any physical brane tiling; they
follow from the conformality of the low energy field theory on the D-branes. As a byproduct I also provide a simple direct proof
that any physical brane tiling has a perfect matching.
\end{abstract}
\thispagestyle{empty} \setcounter{page}{0}
\end{titlepage}

\section{Introduction}
\llabel{intro} Understanding the gauge theory on a stack of D-branes probing a Calabi--Yau singularity is an important problem
with many applications.  While the case of a general Calabi--Yau manifold remains poorly understood, in the past few years
considerable progress has been made in the case of toric Calabi--Yaus. For these, the low energy field theory on the D-branes is
given by a quiver gauge theory, and can be analyzed using powerful brane tiling techniques initiated and developed in
\cite{dimer1,dimer2,dimer3,rhombus,amoeba}.

A different way of understanding such configurations of D-branes is through the derived category approach proposed in
\cite{kont,douglas1,asp-zero}; see \cite{asp-review} for an excellent review. While working with the derived category of coherent
sheaves, $D(\textrm{Coh }X)$, is again difficult for a general Calabi--Yau $X$, the situation again simplifies when $X$ is toric
and tools like exceptional collections can be used to analyze the system \cite{asp-exp,her-exp,han-exp,asp-tor}.

In \cite{ooguri}, based on the mathematical work of Mozgovoy and Reineke \cite{reineke}, Ooguri and Yamazaki proposed a
statistical model of crystal melting that counts bound states of D0 and D2 branes in the background of a single D6 brane wrapping
the whole toric Calabi--Yau. This model utilizes both approaches mentioned above, and can be used for instance to compute
Donaldson--Thomas invariants for an arbitrary toric Calabi--Yau. This crystal is built upon the planar quiver describing the
theory, and the construction relies on a set of consistency conditions on the corresponding brane tiling. In this note I show
that these consistency conditions are indeed satisfied for any quiver theory arising from a configuration of D-branes; they are
shown to follow from the conformality of the low energy field theory.

\section{Toric quiver theories and brane tilings}
\llabel{review}  In this section I briefly review quiver gauge theories arising as low energy field theories on the world volume
of a stack of D-branes probing the conical singularity of a toric Calabi--Yau; for excellent extended reviews on this topic the
reader is referred to \cite{kennaway,yamazaki}.

\paragraph{Quivers:} Quiver gauge theories are specified by a set $Q_0$ of gauge groups SU($N_i$), and a set $Q_1$ of matter fields transforming in the bifundamental
representation $(N_i,\bar{N}_j)$ under two of the gauge groups. In addition there is a superpotential, which for a toric
Calabi--Yau is always a sum of monomials, such that each matter field appears in exactly two terms with opposite signs. Figure
\ref{fig-quiver} shows the quiver and the superpotential corresponding to $dP_1$, where $dP_1$ denotes the toric Calabi--Yau that
is given by a complex cone over the first del Pezzo surface.

\begin{figure}
\begin{center}
\includegraphics[scale=0.3]{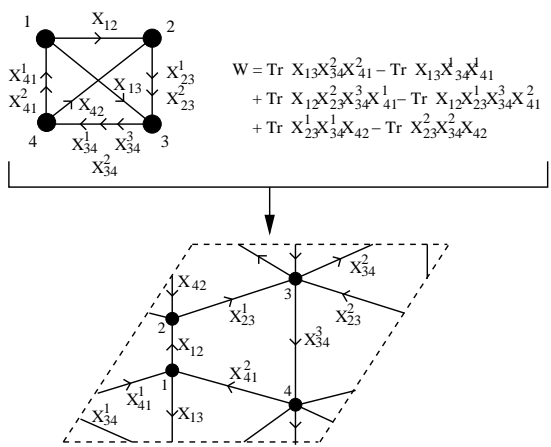}
\end{center}
\caption{\small{Top : The quiver and the superpotential for $dP_1$. Bottom : The corresponding planar quiver.}}
\label{fig-quiver}
\end{figure}

It is well known that any such quiver can be `opened up' and placed on a torus in such a way that the faces of the new graph
correspond to the superpotential terms; a clockwise orientation of a face corresponds to a negative term and vice versa
\cite{dimer2}. This is known as the planar (or periodic) quiver, and is also pictured in figure \ref{fig-quiver}. We will mostly
work with the universal covering of the planar quiver, which is an infinite, periodic graph on $\mathbb{R}^2$. We denote the
quiver by $Q = (Q_0,Q_1,Q_2)$, where $Q_2$ contains the faces, i.e. superpotential terms.

\paragraph{The brane tiling and perfect matchings:} The dual graph of the planar quiver is particularly useful. It is generated by replacing the faces of
the planar quiver by vertices, replacing vertices by faces, and replacing the arrows with perpendicular edges.  See figure
\ref{fig-tiling} for an illustration for the case of $dP_1$. Since the faces of the planar quiver correspond to either positive
or negative superpotential terms, we can color the vertices of the brane tiling to reflect this sign; we choose black vertices
for negative terms and white for positive ones. Thus we see that the brane tiling is a bipartite graph; white vertices are only
connected to black vertices and vice versa. Therefore, a brane tiling $G = (G_0^+,G_0^-,G_1,G_2)$ \footnote{Here I follow the
notation of \cite{reineke}.} consists of two sets of vertices, $G_0^+$ and $G_0^-$, corresponding to positive and negative
superpotential terms, a set of edges $G_1 \subset G_0^+ \times G_0^-$ corresponding to the bifundamental matter fields, and a set
of faces corresponding to the gauge groups.

\begin{figure}
\begin{center}
\includegraphics[scale=0.4]{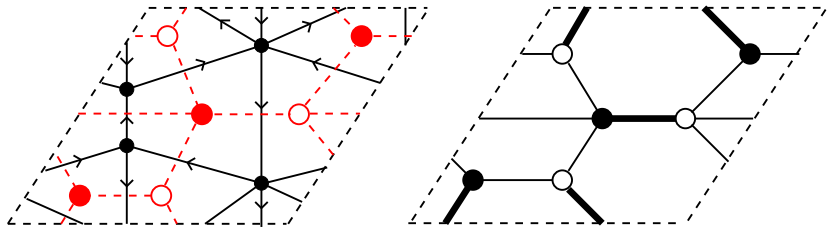}
\end{center}
\caption{\small{Left : The planar quiver (in black) and its dual graph, the brane tiling (in red). Right : The brane tiling. The
thickened lines make up one of the eight perfect matchings of $dP_1$. }} \label{fig-tiling}
\end{figure}

A perfect matching of a tiling $G$ is a subset of the edges such that each vertex in $G_0$ is touched by exactly one edge in the
perfect matching. Figure \ref{fig-tiling} shows one of the perfect matchings of $dP_1$.  A given bipartite graph need not have a
perfect matching; however, I will show in section 3 that for a physical brane tiling, that is a bipartite graph arising from a
configuration of D-branes, a perfect matching always exists.

\paragraph{The isoradial embedding and rhombus lattice:} To prove the consistency conditions of the crystal model
\cite{ooguri,reineke} we need to take into account the fact that the low energy field theory on the world-volume of the branes is
conformal. This means that the NSVZ beta functions for all gauge groups must vanish, which implies (see \cite{novikov,kennaway})
\begin{equation}
\sum_{i\in a} (1-R(X_i)) = 2, \quad \textrm{for all gauge groups } a, \Label{beta}
\end{equation}
\begin{figure}
\begin{center}
\includegraphics[scale=0.4]{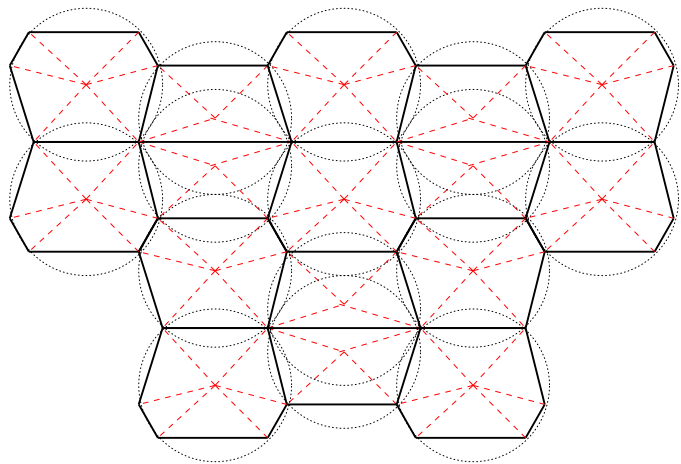}
\end{center}
\caption{\small{The isoradial embedding. Dotted black lines : unit circles. Solid black lines : the brane tiling (several unit
cells; the black and white vertices have been suppressed for clarity). Dashed red lines : Rhombus lattice. }} \label{fig-isorad}
\end{figure}
where the sum is over all the matter fields transforming under that gauge group $a$, and $R(X_i)$ denotes the R-charge of the
corresponding field. Further, the superpotential has to be marginal, so each term in the superpotential must have R-charge 2.
Recalling that in the planar quiver the superpotential terms correspond to faces, this implies
\begin{equation}
\sum_{i\in F} R(X_i) = 2, \quad \textrm{for all superpotential terms } F, \Label{W}
\end{equation}
where $F$ is a face in the planar quiver, and the sum is over all the edges surrounding that face. The relations (\ref{beta}) and
(\ref{W}) can be written in terms of the brane tiling as
\begin{eqnarray}
& & \sum_{e\in V} R_e = 2, \Label{W-2}\\
& & \sum_{e\in F} R_e = \textrm{\#edges } - 2, \Label{beta-2}
\end{eqnarray}
where $V$ is any vertex on the brane tiling and the first sum is over all edges connected to that vertex, and $F$ is a face on
the tiling and the second sum is over all the edges surrounding that face.

\begin{figure}
\begin{center}
\includegraphics[scale=0.4]{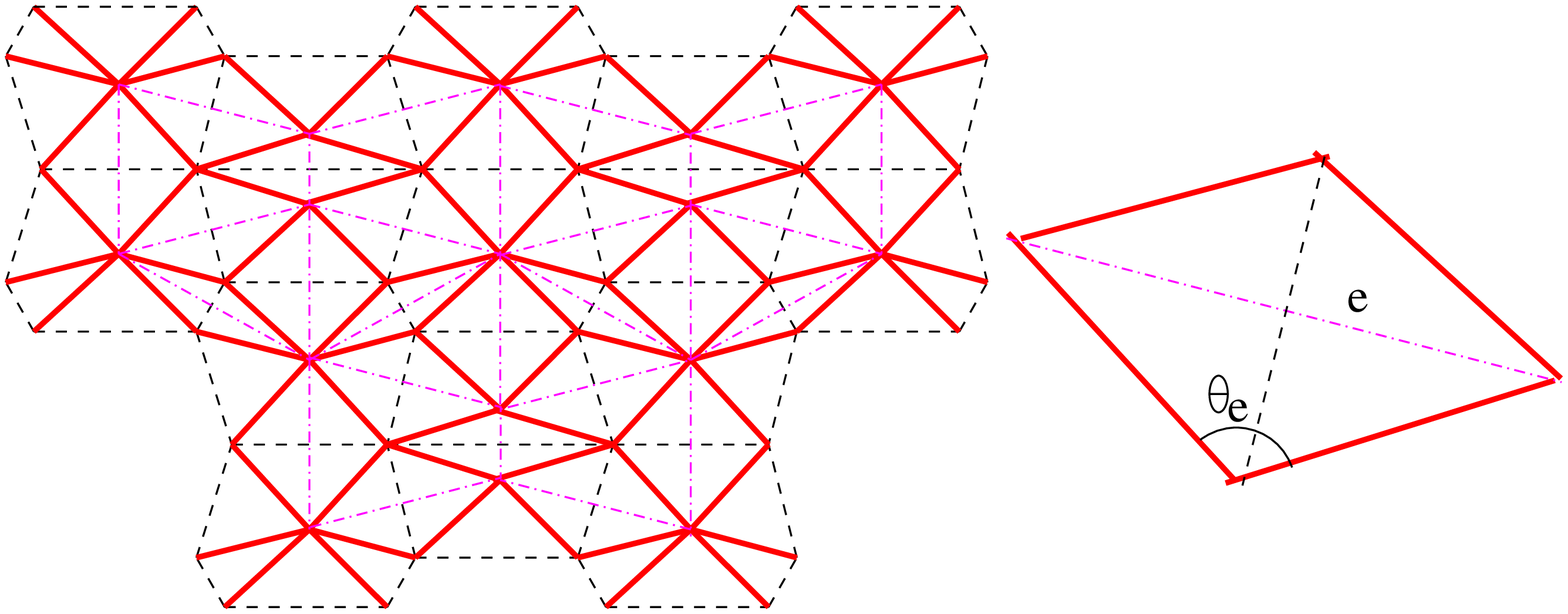}
\end{center}
\caption{\small{Left : the rhombus lattice. The dashed black diagonals give the brane tiling, and the dash-dotted magenta
diagonals give the planar quiver. Right : One of the rhombi characterized by the angle $\theta$. }} \label{fig-isorad2}
\end{figure}

These relations have a very nice geometric interpretation. Consider an embedding of the brane tiling where one draws a unit
circle for each of the faces of the tiling, and places the vertices on the circumference of the circle. This is known as the
isoradial embedding, and is shown in figure \ref{fig-isorad} for $dP_1$. Next draw lines from the centers of the circles to the
vertices on the circumference; this yields a lattice consisting of rhombi and is shown in figures \ref{fig-isorad} and
\ref{fig-isorad2}. Half of the vertices of the rhombi are on the vertices of the brane tiling, i.e. the superpotential terms,
while the other half of the vertices are on the faces of the tiling, i.e. the gauge groups. Thus, depending on which diagonals we
focus on we get the brane tiling or the planar quiver; the rhombus lattice contains the data of both graphs\footnote{We have
suppressed the black and white vertices of the brane tiling, or equivalently the direction of the arrows of the quiver. This data
needs to be included in the rhombus lattice for it to contain the same data as the tiling or quiver.}.

\begin{figure}
\begin{center}
\includegraphics[scale=0.3]{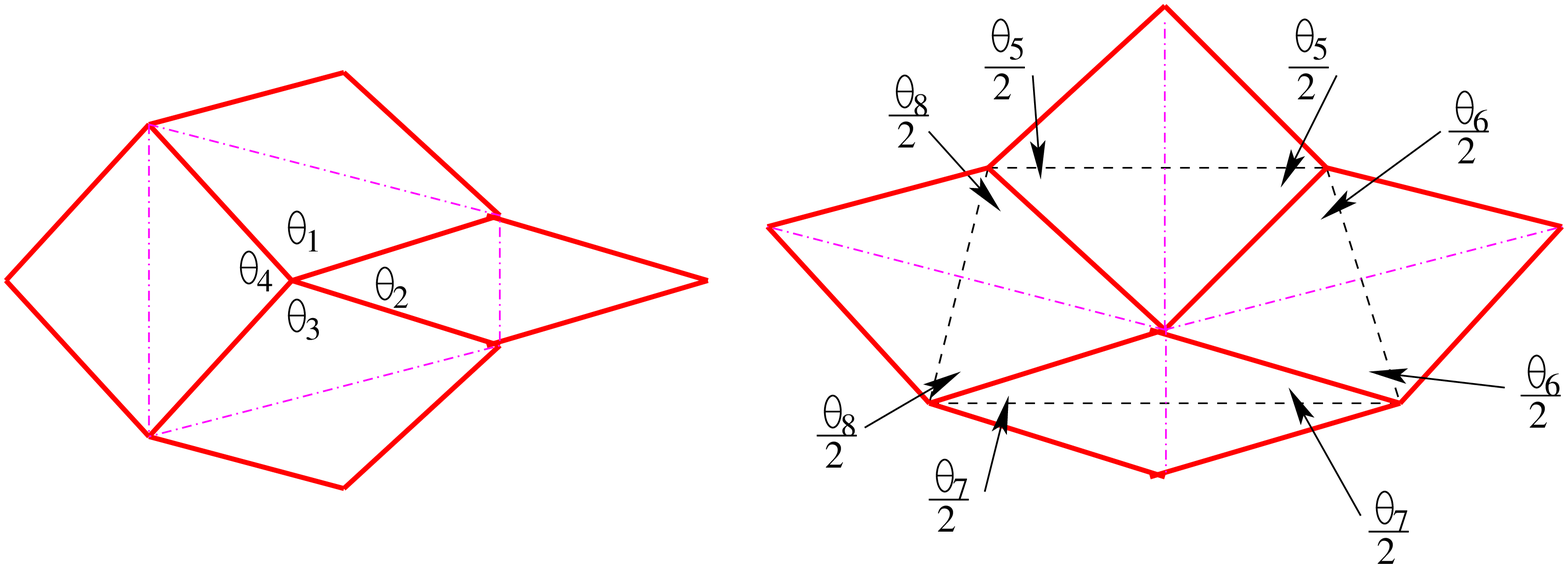}
\end{center}
\caption{\small{Left : Rhombi sharing a superpotential vertex. Right : Rhombi sharing a gauge group vertex. }}
\label{fig-isorad3}
\end{figure}

Each rhombus is characterized by an angle $\theta_e$, which we choose to be opposite the diagonal corresponding to an arrow in
the quiver, and we associate the angle $\theta_e$ to the R-charge $R_e$ of the corresponding field by
\begin{equation}
R_e \equiv \frac{\theta_e}{\pi},
\end{equation}
we see that (\ref{W-2}) and (\ref{beta-2}) are automatically satisfied; the first equation translates to the angles summing to
$2\pi$ around a superpotential vertex, while the second translates to the sum of internal angles in a polygon being (\#edges
-2)$\pi$. This is shown in figure \ref{fig-isorad3}. Note that for the rhombus lattice to be non-degenerate we need all R-charges
to lie in the interval $(0,1)$; this holds for any physical tiling \cite{rhombus}.

\paragraph{Train tracks and zig-zag paths:} The rhombus lattice has a set of special paths, known as rhombus paths or train tracks. These are constructed
by joining together rhombi sharing parallel edges, as in figure \ref{fig-zigzag}. Each such path defines a path in the planar
quiver that turns alternately maximally left and right, known as a zig-zag path.  It is known that for a tiling arising from a
system of branes, i.e. one having an isoradial embedding, a zig-zag path never intersects itself, and two zig-zag paths intersect
at most at one edge \cite{kenyon1,kenyon2,rhombus}.
\paragraph{The path algebra:} Consider the algebra $\mathbb{C}Q$, which consists of all paths in the quiver. The multiplication
in $\mathbb{C}Q$ is defined by joining two paths when the first ends where the second begins, otherwise the product is zero.
Since the arrows on the quiver correspond to bifundamental fields, the F-term relations give relations in the path algebra.
Figure \ref{fig-paths} shows three paths in the path algebra of $dP_1$ which are seen to be identical by using the F-terms
relations. If we denote the ideal generated by the F-term conditions by $\mathcal{F}$, the independent paths are given by the
factor algebra $A = \mathbb{C}Q / \mathcal{F}$.

\begin{figure}
\begin{center}
\includegraphics[scale=0.2]{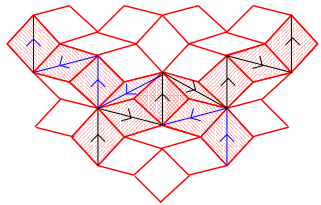}
\end{center}
\caption{\small{Two rhombus paths intersecting. Black and blue lines denote the zig-zag paths in the quiver corresponding to the
rhombus paths.}} \label{fig-zigzag}
\end{figure}

Let us denote the shortest\footnote{`Shortest' meaning of smallest R-charge.} path between vertices $i$ and $j$ by $v_{ij}$.
Also, denote the loop around a face in the quiver by $w$; all such loops are equivalent by the F-term relations. It was shown in
\cite{reineke} that the elements in the factor algebra $A$ can be written in the form $v_{ij} w^n$, where $n\ge 0$ is a natural
number.

\section{Proofs of the consistency conditions}
The crystal model \cite{ooguri} is built on the mathematical work \cite{reineke}, who place three consistency conditions for the
brane tiling: (3.5), (4.12) and (5.3) in \cite{reineke}. The first two of these are fairly trivial, though we prove (3.5) in
detail since as a consequence we can also prove that any physical brane tiling has a perfect matching. Although perfect matchings
are widely used in the physics literature, to my best knowledge this is the first direct proof\footnote{The Kasteleyn matrix and
its determinant, the characteristic polynomial $\sum_{i,j} c_{ij}z^iw^j$, are central to the dimer literature, and perfect
matchings appear in the coefficients $c_{ij}$. If there were no perfect matchings the coefficients would vanish. In \cite{amoeba}
it was argued via mirror symmetry that these coefficients are generally nonzero; the proof given here verifies this result
directly without requiring mirror symmetry.} that they exist for any physical tiling. Most of this section is devoted proving
condition (5.3) of \cite{reineke}.

\paragraph{Condition 3.5:} The first consistency condition of \cite{reineke} is that the tiling should be non-degenerate, meaning
that every edge in the brane tiling belongs to some perfect matching. We will proceed to show that this holds for physical
tilings using the following lemma:

\begin{figure}
\begin{center}
\includegraphics[scale=0.2]{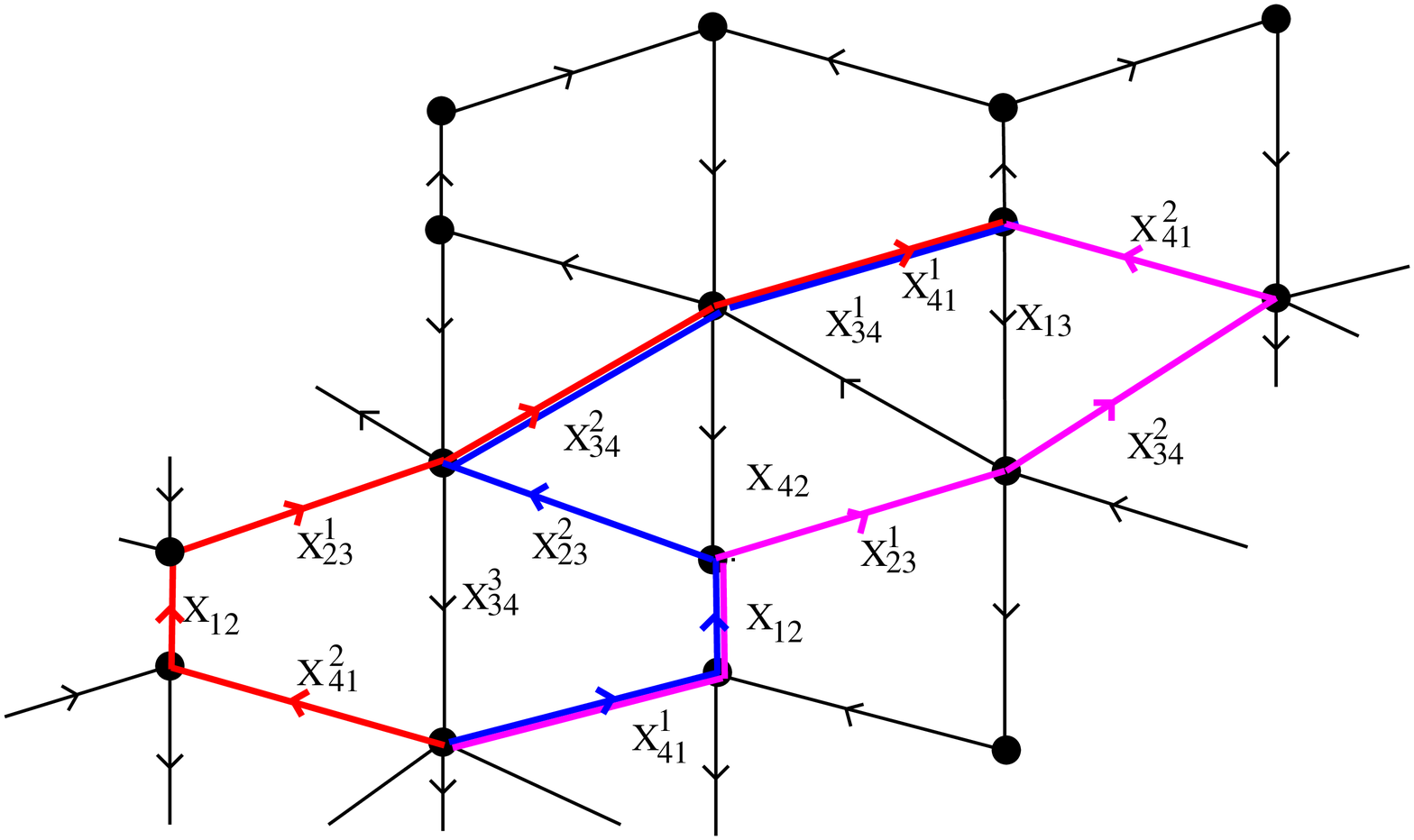}
\end{center}
\caption{\small{Three equivalent paths in the planar quiver of $dP_1$. To see the equivalence of the red and blue paths, use
$\partial W /
\partial X_{34}^2=0$, and to show that the magenta path is equivalent to these two use $\partial W /
\partial X_{13}=0$ and $\partial W /
\partial X_{42}=0$. (Refer to figure \ref{fig-quiver} for the superpotential.)}} \label{fig-paths}
\end{figure}

\begin{lem}
Any physical brane tiling $G = (G_0^+,G_0^-,G_1,G_2)$ has a perfect matching.
\end{lem}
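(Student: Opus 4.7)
The plan is to deduce the existence of a perfect matching from Hall's marriage theorem, using the R-charge data at the isoradially embedded tiling as a weighting. Recall that the tiling lives on a torus (so the graph is finite), it is bipartite by construction, and each edge $e$ carries a weight $R_e \in (0,1)$ with the vertex-sum rule $\sum_{e \ni v} R_e = 2$ for every vertex $v \in G_0^+ \cup G_0^-$.

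First I would verify that the two sides of the bipartition have the same cardinality. Summing the vertex rule over $G_0^+$ counts each edge exactly once (since every edge joins $G_0^+$ to $G_0^-$), giving $\sum_{e \in G_1} R_e = 2|G_0^+|$; summing over $G_0^-$ gives the same expression equal to $2|G_0^-|$. Hence $|G_0^+| = |G_0^-|$, which is the balance condition required by Hall's theorem.

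Next, the key step: I would verify Hall's condition $|N(S)| \ge |S|$ for every subset $S \subseteq G_0^+$, where $N(S) \subseteq G_0^-$ is its neighborhood. Let $E(S)$ denote the set of edges incident to a vertex of $S$. Computing the total R-weight two ways,
\begin{equation*}
2|S| \;=\; \sum_{v \in S}\sum_{e \ni v} R_e \;=\; \sum_{e \in E(S)} R_e \;\le\; \sum_{w \in N(S)}\sum_{e \ni w} R_e \;=\; 2|N(S)|,
\end{equation*}
where the middle inequality holds because every edge in $E(S)$ has its black endpoint in $N(S)$, and because all $R_e>0$, so extending the sum to all edges incident to $N(S)$ can only increase it. Dividing by $2$ gives Hall's condition, so a perfect matching exists. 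By symmetry (swapping the roles of $G_0^+$ and $G_0^-$) the argument is self-dual, but one direction already suffices.

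I do not expect a real obstacle here: once one thinks of the R-charges as providing a doubly-stochastic-like weighting on the bipartite incidence structure, Hall's condition falls out instantly. The only subtlety worth flagging is that the argument uses $R_e > 0$ strictly (not just $R_e \ge 0$), so one should cite the earlier remark that all R-charges lie in $(0,1)$ for any physical tiling. The positivity also ensures that the $\le$ step above is a genuine upper bound coming from dropping non-negative contributions of edges in $E(N(S)) \setminus E(S)$, and no degenerate cases need separate treatment.
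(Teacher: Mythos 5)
Your proof is correct and follows essentially the same route as the paper: Hall's theorem combined with the observation that the R-charges sum to $2$ at every vertex, so double-counting the weight on edges incident to $S$ forces $|N(S)| \ge |S|$. The only difference is that you explicitly verify the balance condition $|G_0^+| = |G_0^-|$, a minor point the paper leaves implicit.
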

\begin{proof}[Proof:] For any subset $A \subset G_0^+$ of white vertices, we define the set of neighbors $N(A) \subset G_0^-$ to consist
of all the black vertices that are connected to vertices in $A$ by an edge in $G_1$. Then Hall's theorem\footnote{Also known as
the `marriage theorem'.} states that
\begin{quote}
A bipartite graph $G$ has a perfect matching if and only if $|A| \le |N(A)|$ for any subset $A\subset G_0^+$.
\end{quote}
Now choose a subset $A \subset G_0^+$. We need to show $|N(A)| \ge |A|$. This follows from marginality of the superpotential;
since each superpotential term has R-charge 2, the edges connected to the vertices in $A$ have total R-charge $2|A|$. Since each
of the edges is connected to a vertex in $N(A)$, this charge is divided among the neighboring vertices $N(A)$. Since each vertex
can have only two units of R-charge, the pigeonhole principle states that $|N(A)| \ge |A|$. This is illustrated in figure
\ref{fig-bad}. This proves that $G$ has a perfect matching.
\end{proof}

\begin{figure}
\begin{center}
\includegraphics[scale=0.2]{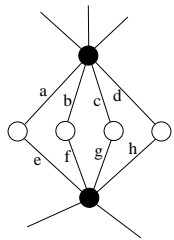}
\end{center}
\caption{\small{Part of an inconsistent tiling: since the white vertices have R-charge 2, we have $a+e=b+f=c+g=d+h=2$, which
implies one of the black vertices must have R-charge greater than two, which conflicts with the marginality of $W$. Similar
arguments were used in a different setting recently in \cite{circuit}.}} \label{fig-bad}
\end{figure}

We can go a bit further and show that $|A| < |N(A)|$ when $A \neq \emptyset, G_0^+$. First assume $|A| = |N(A)|$. Then by the
previous argument each vertex in $|N(A)|$ must already have R-charge 2, and therefore can't be connected to any new vertices,
i.e. $N(N(A)) = A$. But this implies that either $A=G_0^+$, $A=\emptyset$,  or the tiling consists of two or more disconnected
pieces, which is impossible. Thus either $A=\emptyset$ or $A=G_0^+$.

\begin{thm}
Physical brane tilings are non-degenerate, i.e. each edge belongs to a perfect matching.
\end{thm}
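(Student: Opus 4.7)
The plan is to fix an arbitrary edge $e=(v_+,v_-)\in G_1$ with $v_+\in G_0^+,\ v_-\in G_0^-$, and construct a perfect matching of $G$ that contains $e$. The obvious strategy is to delete the two endpoints and show that the reduced bipartite graph $G' = (G_0^+\setminus\{v_+\},\ G_0^-\setminus\{v_-\},\ G_1')$, with $G_1'$ consisting of those edges of $G_1$ meeting neither $v_+$ nor $v_-$, still admits a perfect matching. Adjoining $e$ to such a matching then yields a perfect matching of $G$ containing $e$, establishing non-degeneracy.

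To produce a perfect matching of $G'$ I again invoke Hall's theorem: I must verify that $|N_{G'}(A')|\ge |A'|$ for every $A'\subset G_0^+\setminus\{v_+\}$, where $N_{G'}(A')=N(A')\setminus\{v_-\}$. There are two cases. If $v_-\notin N(A')$, then $|N_{G'}(A')|=|N(A')|\ge|A'|$ by the Hall inequality already established in the previous lemma. If $v_-\in N(A')$, I need the stronger inequality $|N(A')|\ge|A'|+1$, i.e.\ the strict form $|N(A')|>|A'|$. This is exactly the strengthening proved just after the previous lemma, valid provided $A'\ne\emptyset$ and $A'\ne G_0^+$. The second is automatic since $v_+\notin A'$, and the case $A'=\emptyset$ is trivial (both sides are zero and $v_-\notin N(\emptyset)$, so this case actually falls under the first alternative).

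Thus Hall's condition holds for $G'$, so $G'$ has a perfect matching $M'$, and $M=M'\cup\{e\}$ is a perfect matching of $G$ containing the chosen edge $e$. Since $e$ was arbitrary, every edge belongs to some perfect matching, which is precisely the non-degeneracy statement.

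The only delicate point, and the main obstacle, is the upgrade from the weak Hall inequality to the strict one: without strictness the deletion argument collapses whenever $v_-$ happens to be a neighbor of $A'$. Fortunately the strict form has already been deduced from connectedness of the tiling together with the marginality of $W$ (if equality held then $N(N(A'))=A'$, forcing a disconnected tiling), so the whole argument reduces to a clean application of Hall's theorem to the punctured graph $G'$.
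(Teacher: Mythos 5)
Your proposal is correct and follows essentially the same route as the paper: delete the two endpoints of $e$, verify Hall's condition for the punctured graph using the strict inequality $|A|<|N(A)|$ (for $A\neq\emptyset,G_0^+$) established after the lemma, and adjoin $e$ to the resulting matching. Your explicit case split on whether $v_-\in N(A')$ is just a slightly more careful phrasing of the paper's observation that $|N'(A)|$ equals $|N(A)|$ or $|N(A)|-1$.
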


\begin{proof}[Proof:] Pick an edge $e \in G_1$. Let us construct a new bipartite graph $G'$ by removing from $G$ the two vertices that are
connected by $e$, and all the edges that are connected to either of those vertices. Note that if $G$ only had two vertices, each
edge is already a perfect matching; we can thus assume $G'\neq \emptyset$. Pick any subset of white vertices $A\subset G_0^{'+}$.
Then $G'$ has a perfect matching iff $|N'(A)| \ge |A|$.\footnote{$N'(A)$ denotes the neighbors in the tiling $G'$, and $N(A)$
denotes the neighbors in the tiling $G$.} This follows since $G'$ is a subtiling of $G$: $A$ is also a subset of white vertices
in $G$. Furthermore, since $A \neq G_0^+$ we have $|A| < |N(A)|$, where the neighbors are taken in $G$. Restricting to $G'$ we
see that $|N'(A)| = |N(A)|$ or $|N'(A)| = |N(A)|-1$, depending on whether the removed vertex was a neighbor to $A$. Thus $|A| \le
|N'(A)|$, and therefore by Hall's theorem $G'$ has a perfect matching $M'$. This proves the theorem, since we can define $M = M'
\cup e$, which is a perfect matching of the tiling $G$, and contains the edge $e$.
\end{proof}

\paragraph{Condition 5.3:} Having proved condition 3.5, we turn our attention to the second consistency condition:

\begin{thm}
For any two vertices $i,j$ in the planar quiver, there exists an arrow $a : j\to k$ such that $v_{ik} \sim a v_{ij}$.
\llabel{thm-main}
\end{thm}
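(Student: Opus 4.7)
The plan is to extend $v_{ij}$ by an outgoing arrow at $j$ and argue via the F-term relations that at least one such extension introduces no extra face loop $w$. If $i=j$ the result is immediate: any outgoing arrow $a:j\to k$ satisfies $a\sim v_{jk}\,w^n$ for some $n\geq 0$, but $R(a)<1$ whereas $R(v_{jk}\,w^n)=R(v_{jk})+2n$, forcing $n=0$. So assume $i\neq j$ and let $b:l\to j$ be the last arrow of $v_{ij}$, giving $v_{ij}\sim b\,v_{il}$. Because the brane tiling is bipartite, adjacent superpotential faces at $j$ have opposite orientations, and hence arrows alternate incoming/outgoing around the vertex $j\in Q_0$ of the quiver. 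In particular $b$ is flanked by two outgoing arrows $a_+:j\to k_+$ and $a_-:j\to k_-$, lying on the two faces $F_\pm\in Q_2$ that contain $b$.

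The F-term relation $\partial W/\partial X_b=0$ supplies the main input. Writing $P_\pm:k_\pm\to l$ for the remaining arcs of $F_\pm$ (with $b$ deleted), one has $P_+ a_+=P_- a_-$ in $A$, and the loops $P_\pm a_\pm b$ are both face loops at $l$, hence equivalent to $w$. For any outgoing arrow $a:j\to k$, let $n$ be determined by $a\,v_{ij}\sim v_{ik}\,w^n$, $n\geq 0$; the goal is to find some $a$ with $n=0$. For $a=a_\pm$ I would compute $P_\pm a_\pm v_{ij}$ in two ways:
\begin{equation*}
P_\pm a_\pm v_{ij}\;=\;P_\pm a_\pm b\,v_{il}\;\sim\;w\,v_{il}\;\sim\;v_{il}\,w,
\end{equation*}
using that $w$ can be shifted across any path modulo F-terms, while on the other hand $P_\pm a_\pm v_{ij}\sim P_\pm v_{ik_\pm}\,w^{n_\pm}\sim v_{il}\,w^{m_\pm+n_\pm}$, where $m_\pm\geq 0$ comes from the canonical form $P_\pm v_{ik_\pm}\sim v_{il}\,w^{m_\pm}$. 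Uniqueness of the canonical form then forces $m_\pm+n_\pm=1$, whence $n_\pm\in\{0,1\}$. If either $n_+=0$ or $n_-=0$, the corresponding $a_\pm$ proves the theorem.

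The main obstacle is the remaining case $n_+=n_-=1$, equivalently $P_\pm v_{ik_\pm}\sim v_{il}$: the shortest path $i\to l$ admits representatives routing through both neighbors $k_\pm$ of $j$, i.e., through both faces incident to $b$. My plan to exclude this is by induction on $R(v_{ij})$. Since $R(v_{ik_\pm})=R(v_{ij})+R(a_\pm)-2<R(v_{ij})$, the inductive hypothesis provides outgoing arrows at $k_\pm$ that cleanly extend $v_{ik_\pm}$, and propagating this data through the identifications $P_\pm v_{ik_\pm}\sim v_{il}$ together with the analogous analysis at $l$ (where $b$ itself is a clean extension of $v_{il}$) should yield a contradiction. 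I expect the actual obstruction to be geometric, ultimately enforced by the non-self-intersection of zig-zag paths in an isoradially embedded tiling; making this rigorous is where the real work of the proof lies.
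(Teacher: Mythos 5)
Your reduction is correct as far as it goes, and it in fact reproduces the paper's own setup: the two outgoing arrows $a_\pm$ flanking the last arrow $b$ of $v_{ij}$ are exactly the arrows $a_L,a_R$ obtained by continuing the two zig-zag paths through $b$, and your F-term computation showing $n_\pm+m_\pm=1$ is a clean algebraic packaging of the paper's observation that if $a_Lv_{ij}\sim wv_{ik_L}$ then $v_{ij}\sim b\,m\,v_{ik_L}$ (your case $m=0$, $n=1$). The $i=j$ base case is also fine.

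However, there is a genuine gap exactly where you flag one: excluding the case $n_+=n_-=1$ is not a loose end to be tidied up --- it \emph{is} the theorem. Condition (5.3) of Mozgovoy--Reineke can fail for bipartite tilings that do not come from branes, so no amount of pure path-algebra manipulation (F-terms, canonical forms, induction on $R(v_{ij})$) can close the argument; the isoradial/zig-zag geometry must enter essentially, and your sketch only gestures at it. The paper supplies precisely this input through Lemma \ref{lem-zigminimum} (a zig-zag path between two of its vertices is already the shortest path, proved via the rhombus lattice) and Lemma \ref{lem-tough} (shortest-path extensions propagate along a zig-zag path, proved by tracking which region cut out by three zig-zag paths the vertex $i$ lies in, using that zig-zag paths of a physical tiling do not self-intersect and pairwise meet at most once). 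Given your dichotomy, the paper's route to the finish is: if one flanking arrow fails, the rewriting $v_{ij}\sim b\,P\,v_{ik}$ exhibits a shortest path whose last two arrows, followed by the other flanking arrow, form a zig-zag segment, and Lemma \ref{lem-tough} then forces that other arrow to succeed. Your proposed induction on $R(v_{ij})$ is not developed enough to evaluate, and I see no indication that ``propagating'' the identifications $P_\pm v_{ik_\pm}\sim v_{il}$ yields a contradiction without importing the same zig-zag machinery; as written, the proof is incomplete.
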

To prove this, we will need a few useful lemmas:
\begin{lem}
If two vertices $i$ and $j$ are along the same zig-zag path, the zig-zag path provides $v_{ij}$, the shortest path between $i$
and $j$. \llabel{lem-zigminimum}
\end{lem}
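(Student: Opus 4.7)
The plan is to combine invariance of R-charge under F-term moves with the structure theorem of \cite{reineke}. For any arrow $X_a$, the F-term relation $\partial W/\partial X_a=0$ identifies two paths in $\mathbb{C}Q$ which, together with $X_a$, bound the two superpotential-term faces of the planar quiver containing $X_a$; by (\ref{W}) each such face has total R-charge $2$, so both paths have R-charge $2-R(X_a)$. Hence R-charge descends to a well-defined function on $A=\mathbb{C}Q/\mathcal{F}$, and by the structure theorem every path from $i$ to $j$ in $A$ has the form $v_{ij}w^n$ for some $n\geq 0$. Since $R(w)=2$ (a face loop has total R-charge $2$), the R-charges of paths from $i$ to $j$ lie in the discrete set $\{R(v_{ij})+2n:n\geq 0\}$. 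Writing the zig-zag path $z_{ij}$ from $i$ to $j$ as $v_{ij}w^{n_0}$ in $A$, the lemma reduces to the claim $n_0=0$.

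The key geometric input is the isoradial embedding. The arrows of $z_{ij}$ are diagonals of the rhombi in a single train track, all of which share edges in some direction $\vec{u}$. A short computation in a unit rhombus with opposite angle $\theta_e$ shows that the corresponding arrow-diagonal has displacement of magnitude $\sin\theta_e$ in the perpendicular direction $\vec{t}$, with a uniform sign determined by the orientation of the train track. Consequently $z_{ij}$ is strictly monotone in $\vec{t}$ and in particular visits each quiver vertex only once, consistent with the no-self-intersection property of zig-zag paths in physical tilings cited from \cite{kenyon1,kenyon2,rhombus}.

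The main obstacle is converting this $\vec{t}$-monotonicity into the algebraic conclusion $n_0=0$, since individual F-term moves need not preserve monotonicity. My strategy is to instead compare $z_{ij}$ with an arbitrary path $P$ from $i$ to $j$ and show $R(P)\geq R(z_{ij})$; by the discreteness of $R$-values this forces $R(v_{ij})=R(z_{ij})$, so $n_0=0$. The comparison would proceed by decomposing $P$ into maximal sub-paths, each contained in a single train track, and lower-bounding the R-charge of each sub-path by its $\vec{t}$-displacement (since $\sin\theta_e\leq\theta_e$, R-charge is always at least $1/\pi$ times the $\vec{t}$-displacement, while $z_{ij}$ saturates this bound up to the uniform-sign factor). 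Summing over sub-paths and using that the net $\vec{t}$-displacement of $P$ equals that of $z_{ij}$ gives the inequality. The technical content, and the hardest step, is controlling how sub-paths that enter and exit the train track containing $z_{ij}$ contribute; here I would rely on the no-crossing property of zig-zag paths in physical tilings to ensure that the train-track decomposition of $P$ is geometrically well-behaved.
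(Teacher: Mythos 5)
Your reduction is sound up to a point: R-charge is indeed constant on F-term equivalence classes (each relation $\partial W/\partial X_a=0$ equates two paths of R-charge $2-R(X_a)$ by (\ref{W})), so it descends to $A$, and since $R(w)=2$ the R-charges of $i\to j$ paths form the set $\{R(v_{ij})+2n\}$. This correctly reduces lemma \ref{lem-zigminimum} to showing the zig-zag path minimizes R-charge. The gap is in the geometric comparison that is supposed to establish this. Your calibrating functional is the displacement along $\vec{t}$, and the inequality you prove for an arbitrary arrow is $\pi R_a = \theta_a \ge 2\sin(\theta_a/2) \ge |\Delta_{\vec{t}}(a)|$, which indeed sums to $R(P)\ge \frac{1}{\pi}|\Delta_{\vec{t}}(P)|$ for any path $P$. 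But the zig-zag path does \emph{not} saturate this bound: each of its arrows has $|\Delta_{\vec{t}}|=\sin\theta_e$ while contributing $\theta_e$ to $\pi R$, and $\theta_e>\sin\theta_e$ strictly for $\theta_e\in(0,1)\cdot\pi$. So both $R(P)$ and $R(z_{ij})$ sit on the same side of the common lower bound $\frac{1}{\pi}\sum_e\sin\theta_e$, and no comparison between them follows. The sandwich you need --- $R(P)\ge \frac{1}{\pi}\Delta_{\vec{t}} \ge R(z_{ij})$ --- fails in its second inequality.

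Discreteness cannot rescue this: it would suffice to show $R(z_{ij})<R(v_{ij})+2$, i.e.\ $\sum_e(\theta_e-\sin\theta_e)<2\pi$, but this sum grows linearly in the length of the zig-zag segment (e.g.\ with all $\theta_e=\pi/2$ it exceeds $2\pi$ after a dozen arrows), so the bound fails for long paths precisely where the lemma is needed. The difficulty you flagged (sub-paths entering and exiting the train track) is therefore not the hardest step; the argument already breaks for the clean part of the comparison. You would need a calibration that the zig-zag path actually saturates, which $\vec{t}$-displacement is not. The paper avoids this entirely: it closes the zig-zag path $i\to j$ with the ``neighboring'' path $j\to i$ on the other side of the same train track, observes that the resulting loop equals $w^n$ with $n$ the number of enclosed quiver faces, and argues from the rhombus-lattice structure that this $n$ is minimal among all loops $i\to j\to i$, whence both halves are shortest paths. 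That winding/homotopy argument is combinatorial and does not require comparing R-charges of inequivalent paths at all.
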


\begin{figure}
\begin{center}
\includegraphics[scale=0.2]{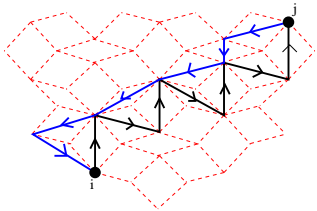}
\end{center}
\caption{\small{A zig-zag path from $i$ to $j$ (black), and the `neighboring' path back to $i$ (blue). }} \label{fig-zig-min}
\end{figure}

\begin{proof}[Proof:] Consider the path from $i$ to $j$ and back again to $i$ portrayed in figure \ref{fig-zig-min}.  This path is
equivalent to $w^n$, where $n$ is the number of faces circled by the path. Due to the structure of the rhombus lattice, $n$ is
the minimal number of faces that will be circled by any path $i\to j \to i$, and thus the path is equivalent to $v_{iji}$, the
path of minimal length from $i$ to $i$ passing through $j$. But clearly $v_{iji} \sim v_{ji}v_{ij}$, which proves that the
zig-zag path is equivalent to $v_{ij}$.
\end{proof}

\begin{lem}
For any two vertices $i$  and $j$, if there is a zig-zag path $Z_1$ passing through $j$ such that $v_{ik} \sim av_{ij}$ and
$v_{il} \sim bav_{ij}$, then $v_{im} \sim cbav_{ij}$.  Here $a$,$b$ and $c$ are the next three arrows in the zig-zag path
starting from $j$, and $k$,$l$ and $m$ are the end points of $a$,$b$ and $c$; see figure \ref{fig-main}. \llabel{lem-tough}
\end{lem}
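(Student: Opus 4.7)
The plan is to identify $v_{im}$ with $c\,v_{il}$ in the factor algebra $A$; combined with the hypothesis $v_{il}\sim ba\,v_{ij}$ this immediately gives the desired $v_{im}\sim cba\,v_{ij}$. Since $c\,v_{il}$ is one path from $i$ to $m$, the structure result of \cite{reineke} reviewed at the end of Section 2 forces $c\,v_{il}\sim v_{im}\,w^{n}$ for some $n\ge 0$, and the whole content of the lemma reduces to showing $n=0$.

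I would argue by contradiction, assuming $n\ge 1$ and picking a minimal path $p$ from $i$ to $m$ with $R(p) < R(c) + R(v_{il})$. Writing $p = q\,a'$ where $a'$ is the terminal arrow of $p$ entering $m$ from some vertex $m'$, minimality of $p$ forces $q\sim v_{im'}$. If $a' = c$, then $m' = l$ and $q\sim v_{il}$, so $R(p) = R(c) + R(v_{il})$, contradicting $R(p) < R(c) + R(v_{il})$. The substantive case is $a' \neq c$, when $a'$ lies on some zig-zag $Z_{2}\neq Z_{1}$ through $m$. Here I would use that $Z_{1}$ and $Z_{2}$ meet in at most one edge, apply Lemma \ref{lem-zigminimum} to $Z_{2}$, and feed in both hypotheses $v_{ik}\sim a\,v_{ij}$ and $v_{il}\sim ba\,v_{ij}$ to compare the R-charges. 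Translating to angles via $R_{e}=\theta_{e}/\pi$ and using the marginality and beta-function identities (\ref{W-2})--(\ref{beta-2}), I would show that the loop at $i$ built by concatenating $c\,v_{il}$ with a return along the ``neighboring'' zig-zag of $Z_{1}$ (as in the proof of Lemma \ref{lem-zigminimum}) encloses strictly fewer faces than the analogous loop built from $p$, producing the required contradiction.

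The main obstacle will be precisely this second case: converting the geometric intuition that, once the minimal path from $i$ has been aligned with the train track of $Z_{1}$ through $j,k,l$, it cannot profit by deviating at the last step, into a rigorous R-charge accounting. I anticipate needing both hypotheses of the lemma to anchor two consecutive vertices of $v_{im'}$ along $Z_{1}$'s train track, and then using the non-self-intersection of zig-zags together with the single-edge intersection property to force any deviating $Z_{2}$ to enclose at least one extra face, i.e.\ an extra factor of $w$ in $R$-charge, which exceeds the assumed deficit $R(c\,v_{il})-R(p)$ and closes the argument.
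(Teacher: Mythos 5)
Your reduction to showing $c\,v_{il}\sim v_{im}$ (i.e.\ $n=0$) and the disposal of the case where the minimal path $p$ ends in the arrow $c$ are fine, but the entire content of the lemma sits in your second case, and there you give only a plan, not an argument. Worse, the central construction of that plan is ill-defined: you propose to close $c\,v_{il}$ into a loop at $i$ by ``returning along the neighboring zig-zag of $Z_1$,'' but Lemma \ref{lem-zigminimum} only produces a minimal path between two vertices lying on the \emph{same} zig-zag path, and $i$ is a completely general vertex that need not lie anywhere near the train track of $Z_1$. So there is no such return path from $m$ to $i$, and the face-counting comparison you want to make between your two loops never gets off the ground. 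You also never say where the two hypotheses $v_{ik}\sim a v_{ij}$ and $v_{il}\sim ba v_{ij}$ actually enter; without them the statement is false, so any correct argument must consume them at a specific point.

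For comparison, the paper's proof does not case-split on the last arrow of $p$. It assumes $cba\,v_{ij}\sim w\,v_{im}$, notes that the putative shorter path $v_{im}$ must cross one of the zig-zag paths through the arrows of the configuration --- say $Z_3$, the second zig-zag through $c$ --- at some vertex $x$, and then uses the relation around the face bounded by $c$, $b$ and the complementary path $d$ to cancel $cb$ and convert the assumption into $d\,v_{im}\sim a\,v_{ij}\sim v_{ik}$; this is where the hypothesis on $a$ is consumed, while the hypotheses together also pin down which region cut out by $Z_1,Z_2,Z_3$ the vertex $i$ can lie in. The contradiction is then local and well-posed: the neighboring path of $Z_3$ from $x$ to $k$ is minimal by the proof of Lemma \ref{lem-zigminimum} (both endpoints lie on $Z_3$'s track), and splicing it onto $v_{im}$ at $x$ gives a path $i\to k$ strictly shorter than $d\,v_{im}$, contradicting $d\,v_{im}\sim v_{ik}$. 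If you want to salvage your approach, you would need to replace your global loop at $i$ with this kind of local comparison along a zig-zag that the deviating path actually crosses.
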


\begin{figure}
\begin{center}
\includegraphics[scale=0.35]{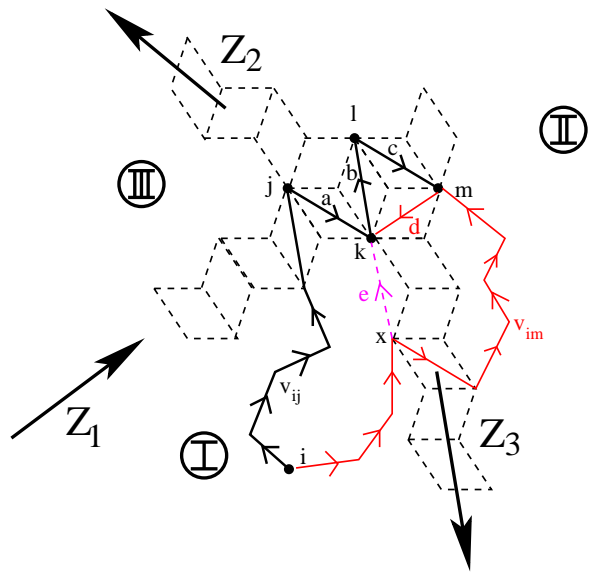}
\end{center}
\caption{\small{The set-up for lemma \ref{lem-tough}. (This is for a generic tiling, not $dP_1$.)}} \label{fig-main}
\end{figure}

\begin{proof}[Proof:] Each arrow in the quiver belongs to exactly two zig-zag paths. Thus, there are unique zig-zag paths different from
$Z_1$ that pass through $b$ and $c$, let us denote these by $Z_2$ and $Z_3$ respectively. These are also shown in figure
\ref{fig-main}; however the reader should note that one shouldn't rely too much on the figure: the way the figure has been drawn
it appears that $Z_2$ and $Z_3$ intersect in the rhombus whose diagonal arrow is denoted $d$. This need not be the case; $d$ is
meant to represent the path that completes the face on which the vertices $k,l,m$ lie; if the face is a triangle as drawn then
$Z_2$ and $Z_3$ intersect there as shown in the figure, if the face is a polygon with more than three edges, then $d$ is not an
edge but rather a path of edges completing the face, in which case $Z_2$ and $Z_3$ do not intersect in that rhombus. The reader
can verify that the proof presented here does not rely on such aspects of the graph.

The three zig-zag paths divide the plane into regions denoted I, II and III in the figure. I have drawn vertex $i$ to reside in
region I; the reader can verify that if $i$ was in region II the conditions of the lemma would not be satisfied, and if it
resided in region III an argument similar to the one given below will show the lemma to be true.

Now let us assume the lemma is not true: $cba v_{ij} \sim w v_{im}$, i.e. there exists a path $v_{im} : i \to m$ that is shorter
than $cbav_{ij}$. This path is shown in figure \ref{fig-main} as a red path. Denote by $x$ the vertex where $v_{im}$ hits the
zig-zag path $Z_3$.\footnote{We have drawn $v_{im}$ to cross $Z_3$; if this is not the case, i.e. the path $v_{im}$ is directed
towards left in the figure, then it must cross the original zig-zag path $Z_1$ and a similar argument applies. Thus we can assume
$v_{im}$ crosses $Z_3$.} Now, we can take the path $wv_{im}$ to be given by travelling first along $v_{im}$ from $i$ to $m$, and
then around the face given by $cbd$.\footnote{Recall the previous comment that $d$ need not be a single arrow, it represents the
rest of the arrows around this face.} Then we have $cba v_{ij} \sim cbdv_{im}$, and we can cancel $cb$ to yield $dv_{im} \sim
av_{ij} \sim v_{ik}$, i.e. $dv_{im}$ is the shortest path $i\to k$. Now consider path $V$, which is defined by travelling on
$v_{im}$ until $x$, and then taking the magenta path $e$ to $k$. Here $e$ is a neighboring path to $Z_3$ in the sense of the
proof of lemma \ref{lem-zigminimum}; it need not be a single arrow as drawn in the figure. It was shown in the proof of lemma
\ref{lem-zigminimum} that such a neighboring path is the shortest path between its endpoints. It is also easily shown that the
red path $x\to k$ is longer than the path $e$; thus we have constructed a path $V : i\to k$ that is shorter than $dv_{im}$. This
is a contradiction since our assumption that the lemma is false yielded $dv_{im} \sim v_{ik}$; thus our assumption must be wrong
and the lemma holds.

The readers can easily convince themselves that if $i$ was located in region $III$ or if the path $v_{im}$ crossed $Z_1$ instead
of $Z_3$ similar arguments can be used to prove the lemma.
\end{proof}

\begin{proof}[Proof of theorem \ref{thm-main}:] Let $i$ and $j$ be two vertices, and $v_{ij}$ the shortest path between them. We need to
show that there exists an arrow $a : j\to k$ such that $av_{ij} \sim v_{ik}$. Consider the last arrow on path $v_{ij}$, which we
denote by $b$. It belongs to exactly two zig-zag paths; one turning maximally right and one turning maximally left at $j$; let us
denote the next arrows on these paths by $a_R$ and $a_L$ respectively; see figure \ref{fig-main2}.

\begin{figure}
\begin{center}
\includegraphics[scale=0.3]{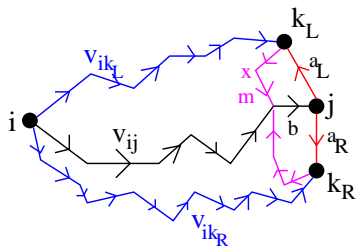}
\end{center}
\caption{\small{The set-up for theorem \ref{thm-main}.}} \label{fig-main2}
\end{figure}

We claim that either $a_L$ or $a_R$ can be identified with $a$ so that the theorem holds.  We prove this by assuming that one of
them does not have the property of $a$, and then showing that the other one must have it. Assume $a_L$ does not have the
property, i.e. $a_L v_{ij} \sim w v_{ik_L}$. We can take $wv_{ik_L}$ to be first $v_{ik_L} : i \to k_L$, and then a loop around
the face indicated in figure \ref{fig-main}, which is surrounded by $a_L$, $b$, and the magenta path $m$. Denote by $x$ the last
vertex before the magenta path hits $b$; if the face is a triangle then $x=k_L$. Since this path shares the same last arrow with
$a_Lv_{ij}$, we can cancel $a_L$ to yield $v_{ij} \sim bmv_{ik_L}$. Thus the path constructed is also the shortest path $i\to j$,
and trivially also the shortest path to any vertex it passes. Following this path to $x$ then gives the shortest path $v_{ix}$.
But now note that the indicated path from $x$ to $k_R$ is a zig-zag path: one first follows the last arrow of the magenta path,
turns maximally left, follows $b$ and then turns maximally right. But we also know that this constructed path is the shortest
path $i\to j$, so the conditions of lemma \ref{lem-tough} are satisfied, and if we take another step on the zig-zag path, by
lemma \ref{lem-tough} we again have a shortest path, and thus $a_R v_{ij} \sim v_{ik_R}$, and the theorem is proved.
\end{proof}

\section*{Acknowledgments}
I am very thankful to Mark Van Raamsdonk and Masahito Yamazaki for helpful comments during the preparation of this note. I am
supported in part by the Natural Sciences and Engineering Research Council of Canada and the Institute for Particle Physics.

\end{document}